\pgfplotsset{compat=1.15}
\newtheorem{theorem}{Theorem}[section]
\newtheorem{remark}{Remark}[section]
\newtheorem{lemma}{Lemma}[section]
\begin{document}
\begin{frontmatter}
\title{Exact Solution for the Rank-One Structured Singular Value with Repeated Complex Full-Block Uncertainty}
\author[UMN]{Talha Mushtaq}\ead{musht002@umn.edu},    
\author[UMich]{Peter Seiler}\ead{pseiler@umich.edu},  
\author[UMN]{Maziar S. Hemati}\ead{mhemati@umn.edu}  
\address[UMN]{Aerospace Engineering and Mechanics,
University of Minnesota,
Minneapolis, MN 55455, USA}  
\address[UMich]{Electrical Engineering and Computer Science,
 University of Michigan, Ann Arbor, MI 48109, USA}    
\begin{abstract}
In this note, we present an exact solution for the structured singular value (SSV) of rank-one complex matrices with repeated complex full-block uncertainty.  
A key step in the proof is the use of Von Neumman's trace inequality. 
Previous works provided exact solutions for rank-one SSV when the uncertainty contains repeated (real or complex) scalars and/or non-repeated complex full-block uncertainties.
Our result with repeated complex full-blocks contains, as special cases, the previous results for repeated complex scalars and/or non-repeated complex full-block uncertainties. 
The repeated complex full-block uncertainty has recently gained attention in the context of incompressible fluid flows.  
Specifically, it has been used to analyze the effect of the convective nonlinearity in the incompressible Navier-Stokes equation (NSE). 
SSV analysis with repeated full-block uncertainty has led to an improved understanding of the underlying flow physics. 
We demonstrate our method on a turbulent channel flow model as an example.
\end{abstract}
\end{frontmatter}
\section{Introduction}
This paper focuses on the computation of the structured singular value (SSV) given a feedback-interconnection between a rank-one complex matrix and a block-structured uncertainty.
The rank-one SSV is well-studied with some prominent results given in \cite{young1994rank, chen1994structuredpart1,chen1994structuredpart2}.
A standard SSV upper-bound can be formulated as a convex optimization \cite{packard1993complex}. 
This SSV upper-bound is equal to the true SSV for rank-one matrices when the uncertainty consists of repeated (real or complex) scalar blocks and non-repeated, complex full-blocks.
This yields an explicit expression for the rank-one SSV with these uncertainty structures (see Theorem 1 and 2 in \cite{young1994rank}). 
Similar results are given in \cite{chen1994structuredpart1,chen1994structuredpart2, fan2006robustness}. 

Our paper builds on this previous literature by providing an explicit solution to the rank-one SSV problem with repeated complex full-block uncertainty. 
This explicit solution is the main result and is stated as Theorem \ref{thm:1} in the paper. 
A key step in the proof is the use of Von Neumann's trace inequality \cite{von1962some}.
The repeated complex full-block uncertainty structure contains, as special cases, repeated complex scalar blocks and non-repeated, complex full-blocks.
Hence our explicit solution encompasses prior results for these cases.  

The repeated complex full-block uncertainty structure has physical relevance in systems such as fluid flows. 
Specifically, this uncertainty structure has recently been used to provide consistent modeling of the nonlinear dynamics \cite{liu21, liu2022strat, Bhattacharjee_et_al_2023, mushtaq2023structured, mushtaq2023riblets}. 
In Section 4, we demonstrate our rank-one solution to analyze a turbulent channel flow model \cite{mckeon2017engine}.
Our explicit rank-one solution is compared against existing SSV upper and lower bound algorithms \cite{Mushtaq2023Algorithm} that were developed for general (not-necessarily rank-one) systems.
\section{Background: Structured Singular Value}
Consider the standard SSV problem for square\footnote{We present the square complex matrix case to improve readability of the paper and minimize notation. The general rectangular complex matrix case can be handled by introducing some additional notation.} complex matrices $M \in \mathbb{C}^{m \times m}$ given by the function $\mu : \mathbb{C}^{m \times m} \to \mathbb{R}$ as
\cite{packard1993complex}
\begin{equation}
    \mu(M) = \left(\min{\|\Delta\|}: \mathrm{det}\left(I_m - M\Delta\right) = 0\right)^{-1}
    \label{eq:ssv}
\end{equation}
where $\Delta \in \mathbb{C}^{m \times m}$ is the structured uncertainty, $I_m$ is an $m \times m$ identity, $\mathrm{det}(\cdot)$ is the determinant and ${\|\cdot\|}$ is the induced 2-norm which is equal to the maximum singular value.
Then, $\mu(M)$ is the SSV of $M$.
For the trivial case where $M = 0$, the minimization in \eqref{eq:ssv} has no feasible point and $\mu(0) = 0$.
%
%
In this paper, we will focus on the case where $M$ is rank-one, i.e., $M = u v^{\mathrm{H}}$ for some ${u, v \in \mathbb{C}^m}$.
Then, using the matrix determinant lemma, the minimization problem in \eqref{eq:ssv} can be equivalently written as \cite{young1994rank, chen1994structuredpart1}
\begin{equation}
\mu (M) = \left(\min{\|\Delta\|}: v^{\mathrm{H}} \Delta u = 1\right)^{-1}.
\label{eq:ssv_rank_1}
\end{equation}
Hence, for any structured $\Delta$, the determinant constraint in \eqref{eq:ssv} can be converted into an equivalent scalar constraint when $M$ is rank-one. 
This scalar constraint is a special case of \emph{affine parameter variation} problem for polynomials with perturbed coefficients \cite{young1994rank, qiu1989simple}.
%
%
We will present solution for \eqref{eq:ssv_rank_1} when $\Delta \in \mathbf{\Delta}$, where $\mathbf{\Delta}$ is a set of repeated complex full-block uncertainties defined as
\begin{align}
    \begin{split}
\label{eq:RFB_1}
\mathbf{\Delta} := & \left \{ \Delta = \text{diag}(I_{r_1} \otimes \Delta_1, \ldots, I_{r_{n}} \otimes \Delta_{n}) \, \right. \\
  & \left. : \, \Delta_i \in \mathbb{C}^{k_i \times k_i} \right \}  \subset \mathbb{C}^{m \times m}.
    \end{split}
\end{align}
This set is comprised of $n$ blocks such that the $i^{\mathrm{th}}$ block, i.e., $I_{r_i} \otimes \Delta_i$, corresponds to a full $k_i \times k_i$ matrix repeated $r_i$ times.
%
%
Any uncertainty $\Delta \in \mathbf{\Delta}$ reduces to the complex uncertainties commonly found in the SSV literature:
%
\begin{enumerate}
    \item When $k_i = 1$ then $\Delta_i$ is a scalar, denoted as $\delta_i$. 
    In this case, the $i^{\mathrm{th}}$ block in \eqref{eq:RFB_1} corresponds to a repeated complex scalar, i.e., $I_{r_i} \otimes \Delta_i = \delta_i I_{r_i}$,
    \item When $r_i = 1$ then the $i^{\mathrm{th}}$ block in \eqref{eq:RFB_1} corresponds to a (non-repeated) complex full-block, i.e., ${I_{r_i} \otimes \Delta_i = \Delta_i}$.
\end{enumerate}
Explicit rank-one solutions of $\mu(M)$ for these special cases are well-known \cite{chen1994structuredpart1, young1994rank}.
However, the current SSV literature does not present any explicit rank-one solutions of $\mu(M)$ for the repeated complex full-block case, which is a more general set of complex uncertainties, i.e., for any $\Delta \in \mathbf{\Delta}$.
These uncertainty structures have physical importance in engineering systems such as fluid flows \cite{liu21,liu2022strat, mushtaq2023structured, Bhattacharjee_et_al_2023}, where they have been exploited to provide physically consistent approximations of the convective nonlinearity in the Navier-Stokes equations (NSE).
Therefore, in the next section, we will present an explicit rank-one solution of $\mu(M)$ for any $\Delta \in \mathbf{\Delta}$.
%
%
%
%
%
It is important to note that the solutions presented in this paper are not limited to fluid problems and can be used for any other system that has $\Delta \in \mathbf{\Delta}$.
\section{Repeated Complex Full-Block Uncertainty (Main Result)}
\label{sec:3}
%
%
%
%
Consider the problem in \eqref{eq:ssv_rank_1} for any $\Delta \in \mathbf{\Delta}$.
We can partition $u, v \in \mathbb{C}^m$ compatibly with the $n$ blocks of $\Delta \in \mathbf{\Delta}$:
\begin{equation}
    u = \begin{bmatrix}
        u_1^{\mathrm{H}} & \ldots & u_n^{\mathrm{H}}
    \end{bmatrix}^{\mathrm{H}}, \, v = \begin{bmatrix}
        v_1^{\mathrm{H}} & \ldots & v_n^{\mathrm{H}}
    \end{bmatrix}^{\mathrm{H}}
\end{equation}
where $u_i, v_i \in \mathbb{C}^{k_i r_i}$.
Note that $m = \sum_{i = 1}^{n} r_i k_i$.
Since, the $i^{\mathrm{th}}$ block is $I_{r_i} \otimes \Delta_i$, we can further partition $u_i, v_i$ based on the repeated structure:
\begin{equation}
    u_i = \begin{bmatrix}
        u_{i,1}^{\mathrm{H}} & \ldots & u_{i,r_i}^{\mathrm{H}}
    \end{bmatrix}^{\mathrm{H}}, \, v_i = \begin{bmatrix}
        v_{i,1}^{\mathrm{H}} & \ldots & v_{i,r_i}^{\mathrm{H}}
    \end{bmatrix}^{\mathrm{H}}
\end{equation}
where each $u_{i,j}, v_{i,j} \in \mathbb{C}^{k_i}$.
Based on this partitioning, define the following matrices (for $i = 1, \ldots, n$):
\begin{equation}
    Z_i = \sum_{j = 1}^{r_i} u_{i,j} v_{i,j}^{\mathrm{H}} \in \mathbb{C}^{k_i \times k_i}.
    \label{eq:Z_i}
\end{equation}
\begin{lemma}
    Let $M = uv^\mathrm{H}$ be given with $u,v \in \mathbb{C}^m$ and define $Z_i$ as in \eqref{eq:Z_i}. Then, for any $\Delta \in \mathbf{\Delta}$, we have
    \begin{equation}
        \mathrm{det}\left( I_m - M \Delta \right) = 1 - \sum_{i = 1}^n \mathrm{Tr}\left( Z_i \Delta_i \right).
    \end{equation}
    \label{lemma:1}
\end{lemma}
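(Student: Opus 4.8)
The plan is to use the matrix determinant lemma to reduce $\mathrm{det}(I_m - M\Delta)$ to a scalar, and then unpack the Kronecker and rank-one structure to identify that scalar with $\sum_i \mathrm{Tr}(Z_i\Delta_i)$. Since $M = uv^{\mathrm{H}}$ is rank-one, the matrix determinant lemma gives $\mathrm{det}(I_m - uv^{\mathrm{H}}\Delta) = 1 - v^{\mathrm{H}}\Delta u$, so everything reduces to computing the scalar $v^{\mathrm{H}}\Delta u$ and showing it equals $\sum_{i=1}^n \mathrm{Tr}(Z_i\Delta_i)$.

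Next I would exploit the block-diagonal structure of $\Delta = \text{diag}(I_{r_1}\otimes\Delta_1,\ldots,I_{r_n}\otimes\Delta_n)$ together with the compatible partition $u = [u_1^{\mathrm{H}}\ \cdots\ u_n^{\mathrm{H}}]^{\mathrm{H}}$, $v = [v_1^{\mathrm{H}}\ \cdots\ v_n^{\mathrm{H}}]^{\mathrm{H}}$ to write $v^{\mathrm{H}}\Delta u = \sum_{i=1}^n v_i^{\mathrm{H}}(I_{r_i}\otimes\Delta_i)u_i$. Then, using the sub-partition $u_i = [u_{i,1}^{\mathrm{H}}\ \cdots\ u_{i,r_i}^{\mathrm{H}}]^{\mathrm{H}}$ and likewise for $v_i$, the block-diagonal action of $I_{r_i}\otimes\Delta_i$ yields $v_i^{\mathrm{H}}(I_{r_i}\otimes\Delta_i)u_i = \sum_{j=1}^{r_i} v_{i,j}^{\mathrm{H}}\Delta_i u_{i,j}$. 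The final step is the standard cyclic-trace identity for a scalar: $v_{i,j}^{\mathrm{H}}\Delta_i u_{i,j} = \mathrm{Tr}(v_{i,j}^{\mathrm{H}}\Delta_i u_{i,j}) = \mathrm{Tr}(\Delta_i u_{i,j} v_{i,j}^{\mathrm{H}})$, so summing over $j$ gives $\mathrm{Tr}(\Delta_i \sum_{j=1}^{r_i} u_{i,j}v_{i,j}^{\mathrm{H}}) = \mathrm{Tr}(\Delta_i Z_i) = \mathrm{Tr}(Z_i\Delta_i)$ by the definition of $Z_i$ in \eqref{eq:Z_i}.

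Assembling these pieces gives $v^{\mathrm{H}}\Delta u = \sum_{i=1}^n \mathrm{Tr}(Z_i\Delta_i)$, and substituting into the matrix determinant lemma yields the claimed identity $\mathrm{det}(I_m - M\Delta) = 1 - \sum_{i=1}^n \mathrm{Tr}(Z_i\Delta_i)$. There is no real obstacle here — the lemma is essentially a bookkeeping exercise — but the step that most warrants care is correctly tracking the two-level index partition and confirming that the Kronecker product $I_{r_i}\otimes\Delta_i$ acts block-diagonally on the sub-vectors $u_{i,j}$, so that the cross terms between different repetitions $j \neq j'$ vanish. Once that is set up cleanly, the trace rewriting is immediate.
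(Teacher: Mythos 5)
Your proposal is correct and follows essentially the same route as the paper's proof: matrix determinant lemma to reduce to the scalar $v^{\mathrm{H}}\Delta u$, block-diagonal decomposition across the $n$ blocks and the $r_i$ repetitions, then the cyclic trace identity to collect the rank-one terms into $\mathrm{Tr}(Z_i\Delta_i)$. No gaps.
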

\begin{proof}
    Using the matrix determinant lemma, we have
    \begin{equation}
        \mathrm{det}\left(I_m - M\Delta \right) = 1 - v^{\mathrm{H}} \Delta u.
        \label{eq:1}
    \end{equation}
    Now, using the block-structure of $\Delta \in \mathbf{\Delta}$ and the corresponding partitioning of $\left(u,v \right)$,  we can rewrite \eqref{eq:1} as
    \begin{equation}
    \begin{aligned}
        1 - v^{\mathrm{H}} \Delta u & = 1 - \sum_{i = 1}^n   v_{i}^{\mathrm{H}} \left(I_{r_i} \otimes \Delta_i \right) u_{i} \\
        & = 1 - \sum_{i = 1}^n  \left[ \sum_{j = 1}^{r_i} v_{i,j}^{\mathrm{H}} \Delta_i u_{i,j} \right].
    \end{aligned}
        \label{eq:2}
    \end{equation}
    Note that the term in brackets is a scalar and hence equal to its trace.
    Thus, use the cyclic property of the trace as
    \begin{equation}
        \begin{aligned}
            \sum_{j = 1}^{r_i} \mathrm{Tr}\left[v_{i,j}^{\mathrm{H}} \Delta_i u_{i,j}\right] & = \sum_{j = 1}^{r_i} \mathrm{Tr}\left[u_{i,j} v_{i,j}^{\mathrm{H}} \Delta_i \right] \\
            & = \mathrm{Tr}\left[ Z_i \Delta_i\right].
        \end{aligned}
        \label{eq:3}
    \end{equation}
    Combine \eqref{eq:1}, \eqref{eq:2} and   \eqref{eq:3} to obtain the stated result.
\end{proof}
Lemma \ref{lemma:1} is used to provide an explicit solution for rank-one SSV with repeated complex full-blocks. 
This is stated next as Theorem \ref{thm:1}.
\vspace*{0.5 cm}
\begin{theorem}
\label{thm:1}
    Let $M = u v^{\mathrm{H}}$ be given with $u,v \in \mathbb{C}^m$ and define $Z_i$ as in \eqref{eq:Z_i}.
    Then,
    \begin{equation}
    \mu(M) = \sum_{i = 1}^n \sum_{j = 1}^{k_i} \sigma_{j}\left( Z_i \right),
    \label{eq:mu_M}
    \end{equation}
    where $\sigma_{j}\left( Z_i \right)$ is the $j^{\mathrm{th}}$ singular value of $Z_i$.
\end{theorem}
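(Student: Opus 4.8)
The plan is to use Lemma~\ref{lemma:1} to rewrite the rank-one SSV problem entirely in terms of the matrices $Z_i$ and the blocks $\Delta_i$. By \eqref{eq:ssv_rank_1} and Lemma~\ref{lemma:1}, the determinant constraint $\det(I_m - M\Delta) = 0$ is equivalent to the scalar constraint $\sum_{i=1}^n \mathrm{Tr}(Z_i \Delta_i) = 1$, and since $\|\Delta\| = \max_i \|\Delta_i\|$ for $\Delta \in \mathbf{\Delta}$, the problem becomes
\begin{equation}
\mu(M)^{-1} = \min \left\{ \max_{1\le i\le n} \|\Delta_i\| \ : \ \sum_{i=1}^n \mathrm{Tr}(Z_i \Delta_i) = 1 \right\}.
\label{eq:plan_reformulated}
\end{equation}
I would prove \eqref{eq:mu_M} by establishing matching lower and upper bounds on $\mu(M)^{-1}$, i.e., showing the optimal value of \eqref{eq:plan_reformulated} equals $1 / \sum_{i}\sum_{j}\sigma_j(Z_i)$.

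For the \emph{lower bound} on $\mu(M)^{-1}$ (equivalently, the upper bound on $\mu(M)$), I would take any feasible $\Delta$ and bound $1 = \sum_i \mathrm{Tr}(Z_i\Delta_i) \le \sum_i |\mathrm{Tr}(Z_i\Delta_i)|$. Here is where Von Neumann's trace inequality enters: $|\mathrm{Tr}(Z_i\Delta_i)| \le \sum_{j=1}^{k_i} \sigma_j(Z_i)\sigma_j(\Delta_i) \le \|\Delta_i\| \sum_{j=1}^{k_i}\sigma_j(Z_i) \le \left(\max_\ell\|\Delta_\ell\|\right)\sum_{j=1}^{k_i}\sigma_j(Z_i)$. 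Summing over $i$ gives $1 \le \left(\max_\ell \|\Delta_\ell\|\right)\sum_{i}\sum_{j}\sigma_j(Z_i)$, so $\|\Delta\| = \max_\ell\|\Delta_\ell\| \ge 1/\sum_i\sum_j\sigma_j(Z_i)$, which holds for every feasible $\Delta$ and hence bounds the minimum from below.

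For the \emph{upper bound} on $\mu(M)^{-1}$, I would exhibit an explicit feasible $\Delta^\star$ that achieves equality. Using the (economy) SVD $Z_i = U_i \Sigma_i V_i^{\mathrm{H}}$, set $\Delta_i^\star = \frac{1}{c}\, V_i U_i^{\mathrm{H}}$ where $c = \sum_{i}\sum_j \sigma_j(Z_i)$. Then $\mathrm{Tr}(Z_i \Delta_i^\star) = \frac{1}{c}\mathrm{Tr}(U_i\Sigma_i V_i^{\mathrm{H}} V_i U_i^{\mathrm{H}}) = \frac{1}{c}\mathrm{Tr}(\Sigma_i) = \frac{1}{c}\sum_j\sigma_j(Z_i)$, so the constraint $\sum_i\mathrm{Tr}(Z_i\Delta_i^\star) = 1$ holds; moreover $\|\Delta_i^\star\| = 1/c$ for every $i$ (since $V_iU_i^{\mathrm{H}}$ has orthonormal columns/is a partial isometry with unit norm), so $\|\Delta^\star\| = 1/c$. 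This shows the minimum in \eqref{eq:plan_reformulated} is at most $1/c$. Combining the two bounds yields $\mu(M)^{-1} = 1/c$, i.e., \eqref{eq:mu_M}. One should also dispatch the degenerate case where all $Z_i = 0$ (then no feasible $\Delta$ exists and $\mu(M) = 0$, consistent with the empty sum on the right-hand side).

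The main obstacle I anticipate is the careful invocation of Von Neumann's trace inequality for the lower bound — in particular, getting the inequality $|\mathrm{Tr}(Z_i\Delta_i)| \le \sum_j \sigma_j(Z_i)\sigma_j(\Delta_i)$ right and then correctly relaxing $\sigma_j(\Delta_i)$ to $\|\Delta_i\|$ and to the common bound $\max_\ell\|\Delta_\ell\|$. A secondary subtlety is verifying that the proposed $\Delta^\star$ actually attains $\|\Delta_i^\star\| = 1/c$ simultaneously in all blocks (so that the max is not strictly larger than any individual norm), and handling the rank-deficient $Z_i$ case cleanly via the economy SVD so that $V_i U_i^{\mathrm{H}}$ is well-defined with the correct norm.
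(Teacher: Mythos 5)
Your proposal is correct and follows essentially the same route as the paper: both directions hinge on Lemma~\ref{lemma:1}, the witness $\Delta_i^\star = \frac{1}{c}V_iU_i^{\mathrm{H}}$ built from the SVD of $Z_i$ for one bound, and Von Neumann's trace inequality with the relaxation $\sigma_j(\Delta_i)\le\|\Delta\|$ for the other (the paper phrases that direction contrapositively, showing no $\Delta$ with $\|\Delta\|<1/c$ can cause singularity, which is logically the same argument). Your explicit handling of the degenerate case where all $Z_i=0$ is a small extra care the paper omits.
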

\begin{proof}
Define $c = \sum_{i = 1}^n \sum_{j = 1}^{k_i} \sigma_{j}\left( Z_i \right) $ to simplify notation.
The proof consists of 2 directions: $(i) \, \mu(M) \ge c$ and $(ii) \, \mu(M) \le c$.

$(i) \, \mu(M) \ge c:$ Let $Z_i = U_i \Sigma_i V_i^{\mathrm{H}}$ be the singular value decomposition (SVD) of $Z_i$.
Note that $\Sigma_i = \mathrm{diag}(\sigma_1(Z_i), \ldots, \sigma_{k_i}(Z_i))$.
Then, define $\overline{\Delta} \in \mathbf{\Delta}$ with the blocks $\overline{\Delta}_i = \frac{1}{c} V_i U_i^{\mathrm{H}}$ ($i=1,\ldots, n$).
Thus, by Lemma \ref{lemma:1}, we have 
\begin{equation}
    \mathrm{det}(I_m - M \overline{\Delta}) = 1 - \sum_{i = 1}^n \mathrm{Tr} \left[Z_i \overline{\Delta}_i \right].
    \label{eq:thm_1}
\end{equation}
Now, substitute the SVD of $Z_i$ in \eqref{eq:thm_1} and use the cyclic property of trace:
\begin{equation}
\begin{aligned}
    \mathrm{det}(I - M \overline{\Delta}) & = 1 - \sum_{i = 1}^n \mathrm{Tr} \left[\Sigma_i V_i^{\mathrm{H}} \overline{\Delta}_i U_i \right] \\
    & = 1 - \frac{1}{c} \sum_{i = 1}^n \mathrm{Tr} \left[\Sigma_i \right] = 0.
\end{aligned}
\end{equation}
Hence $\overline{\Delta}$ causes singularity and $\|\overline{\Delta}\|_2 = \frac{1}{c}$.
Thus, the minimum $\|\Delta\|$ in \eqref{eq:ssv_rank_1} must satisfy $\|\Delta\| \leq \frac{1}{c}$ and consequently, $\mu(M) \geq c$.

$(ii) \, \mu(M) \le c:$ \,\, Let $\Delta \in \mathbf{\Delta}$ be given with $\|\Delta\| < \frac{1}{c}$.
Von Neumann's trace inequality \cite{von1962some} gives:
\begin{equation}
    | \mathrm{Tr} \left[Z_i \Delta_i \right] | \leq \sum_{j = 1}^{k_i} \sigma_j(Z_i) \sigma_j(\Delta_i) 
    \label{eq:ineq}
\end{equation}
where $| \cdot |$ is the absolute value.
Note that $\|\Delta\| < \frac{1}{c}$ implies that each block satisfies the same bound: ${\sigma_j(\Delta_i) < \frac{1}{c}}$.
Hence, \eqref{eq:ineq} implies
\begin{equation}
    \begin{aligned}
        \left| \mathrm{Tr}[ Z_i \Delta_i ] \right| < \frac{1}{c} \sum_{j = 1}^{k_i} \sigma_j(Z_i).
    \end{aligned}
    \label{eq:ineq_2}
\end{equation}
Next, using Lemma \ref{lemma:1} and the inequality in \eqref{eq:ineq_2}, we get
\begin{equation}
    \begin{aligned}
        \mathrm{det}(I_m - M \Delta) & = 1 - \sum_{i = 1}^n \mathrm{Tr}\left[ Z_i \Delta_i\right] \\
        & > 1 - \frac{1}{c} \sum_{i = 1}^n \left[\sum_{j = 1}^{k_i} \sigma_j(Z_i) \right] = 0.
    \end{aligned}
\end{equation}
Hence, any $\Delta \in \mathbf{\Delta}$ with $\|\Delta\| < \frac{1}{c}$ cannot cause ${(I_m - M \Delta)}$ to be singular.
Thus, the minimum $\|\Delta\|$ in \eqref{eq:ssv_rank_1} must satisfy $\|\Delta\| \geq \frac{1}{c}$ and consequently, $\mu(M) \leq c$.
\end{proof}
\begin{remark}
For the special cases $r_i = 1$ and $k_i = 1$, the solution \eqref{eq:mu_M} yields $\mu(M) = \sum_{i = 1}^{n} \|u_i\|_2 \|v_i\|_2$ and $\mu(M) = \sum_{i = 1}^n |v_i^{\mathrm{H}} u_i|$.
These special cases correspond to solutions presented in previous works for non-repeated, complex full-block and repeated complex scalar uncertainties, respectively \cite{chen1994structuredpart1, young1994rank}.
\end{remark}
\section{Results}
In this section, we demonstrate our SSV solution method for repeated complex full-blocks using a rank-one approximation of the turbulent channel flow model.
As validation, we will compare our solutions against general upper and lower-bound algorithms that have been developed for (not necessarily rank-one) systems with repeated complex full-block uncertainties.
The upper and lower-bounds are computed using \emph{Algorithm 1} (Upper-Bounds) and \emph{Algorithm 3} (Lower-Bounds) in \cite{Mushtaq2023Algorithm}, which are based on Method of Centers \cite{Boyd2004} and Power-Iteration \cite{packard1993complex}, respectively.
Generally, these algorithms can be used for higher rank problems (see for example \cite{mushtaq2023structured} and \cite{Bhattacharjee_et_al_2023}).
Additionally, we will compare the computational times between each of the methods to demonstrate the computational scaling of the rank-one SSV solution.

\subsection{Example}
The spatially-discretized turbulent channel flow model described in \cite{mckeon2017engine} has the following higher-order dynamical equation:
\begin{align}
    \begin{split}
        & E(\kappa_x, \kappa_z) \Dot{\phi}(y) = A(Re, \kappa_x, \kappa_z) \phi(y) + B(\kappa_x, \kappa_z) f(y) \\
        & \zeta(y) = C(\kappa_x, \kappa_z) \phi(y) \\
        & f(y) = \Delta \zeta(y)
    \end{split}
    \label{eq:dyn_eq}
\end{align}
where $Re$ is the Reynolds number, $\kappa_x$ and $\kappa_z$ are the streamwise ($x$) and spanwise ($z$) direction wavenumbers resulting from the discretization, and the wall-normal direction is given by $y$.
Here, the states $\phi(y) \in \mathbb{C}^{4N}$ and outputs $\zeta(y) \in \mathbb{C}^{9N}$ are given by the following:
\begin{align}
    \begin{split}
        & \phi(y) = [u(y)^{\mathrm{T}},v(y)^{\mathrm{T}},w(y)^{\mathrm{T}},p(y)^{\mathrm{T}}]^{\mathrm{T}}, \\ & \zeta(y) = [(\nabla u(y))^{\mathrm{T}},(\nabla v(y))^{\mathrm{T}},(\nabla w(y))^{\mathrm{T}}]^{\mathrm{T}}
    \end{split}
\end{align}
where $u(y) \in \mathbb{C}^{N}$, $v(y) \in \mathbb{C}^{N}$, $w(y) \in \mathbb{C}^{N}$ and $p(y) \in \mathbb{C}^{N}$ are streamwise, wall-normal and spanwise velocities, and pressure, respectively.
Also, $N$ is the number of collocation points in $y$ to evaluate the system, $\nabla \in \mathbb{C}^{3N \times N}$ is the discrete gradient operator and $E(\kappa_x, \kappa_z) \in \mathbb{C}^{4N \times 4N}$, $A(Re, \kappa_x, \kappa_z) \in \mathbb{C}^{4N \times 4N}$, $B(\kappa_x, \kappa_z) \in \mathbb{C}^{4N \times 3N}$ and $C(\kappa_x, \kappa_z) \in \mathbb{C}^{9N \times 4N}$ are the matrix operators.
Readers are referred to the work in \cite{mckeon2017engine} for details on the construction of matrix operators.
It is important to note that $\Delta$ for this system has a repeated complex full-block structure that results from the approximate modeling of the quadratic convective nonlinearity as,
\begin{equation}
    f(y) = \left[\begin{smallmatrix}
        -u_\xi^{\mathrm{T}} & 0 & 0 \\
        0 & -u_\xi^{\mathrm{T}} & 0 \\
        0 & 0 & -u_\xi^{\mathrm{T}} \\
    \end{smallmatrix}\right] \left[\begin{smallmatrix}
        \nabla u \\
        \nabla v \\
        \nabla w \\
    \end{smallmatrix}\right] = (I_3 \otimes -u_\xi^{\mathrm{T}}) \zeta(y)
\end{equation}
where $f(y) \in \mathbb{C}^{3N}$ is the forcing signal and $u_\xi \in \mathbb{C}^{3N \times N}$ is the velocity gain matrix.
Thus, the last row of equations in \eqref{eq:dyn_eq} describes the nonlinear forcing with $\Delta = I_3 \otimes -u_\xi^{\mathrm{T}}$ as the uncertainty matrix.
Further details are given in \cite{liu21} about the $\Delta$ modeling.
The input-output map of the system in \eqref{eq:dyn_eq} is given by,
\begin{equation}
    H(y;Re, \omega,\kappa_x,\kappa_z) = C(\mathrm{i}\omega E - A)^{-1}B,
    \label{eq:full-rank-eq}
\end{equation}
where $\omega$ is the temporal frequency.
$H(y;Re, \omega,\kappa_x,\kappa_z)$ in \eqref{eq:full-rank-eq} is, in general, not a rank-one matrix. However, for demonstration of our method, we will approximate $H(y;Re, \omega,\kappa_x,\kappa_z)$ as a rank-one input-output operator at each of the temporal frequencies $\omega$ for a fixed $Re$, $\kappa_x$ and $\kappa_z$---as is commonly done for such analyses~\cite{mckeon2017engine}:
\begin{equation}
    M_{\omega_i} = \overline{\sigma}_i a_{1_i} b_{1_i}^{\mathrm{H}}, \,\, i = 1, \ldots,N_\omega
\end{equation}
where $N_\omega$ are the total number of frequency points, $\overline{\sigma}_i \in \mathbb{R}_{\geq 0}$ is the maximum singular value of a matrix, and $a_{1_i} \in \mathbb{C}^{9N}$ and $b_{1_i} \in \mathbb{C}^{3N}$ are the left and right unitary vectors associated with $\overline{\sigma}_i$, respectively.
Then, the rank-one SSV is given by $\mu_{\max} = \max_i \,\mu(M_{\omega_i})$, where $\mu(M_{\omega_i})$ is computed using \eqref{eq:mu_M}.
\subsection{Numerical Implementation}
We will compute $\mu_{\max}$ on an $N_{\kappa} \times N_{\kappa} \times N_\omega$ grid of space and temporal frequencies.
The spatial frequencies (wavenumbers) $\kappa_x$ and $\kappa_z$ are both defined on a log-spaced grid of $N_\kappa = 50$ points in the interval $[10^{-1.45}, 10^{2.55}]$. 
This grid is denoted $G_\kappa$. The temporal frequency $\omega$ is defined on a grid $G_\omega:=\{ c_p G_\kappa\}$, where $c_p$ is the wave speed, i.e., speed of the moving base flow (see \cite{mckeon2017engine} for details).  
Wave speeds are chosen as $c_p \in \{5, 10, 15, 18, 22 \}$ resulting in $N_\omega = 250$ points in the temporal frequency grid.
Additionally, we will fix $Re = 180$ and $N = 60$ for all computations and use MATLAB's \texttt{parfor} command to loop over temporal frequencies.
\begin{figure}[!ht]
\begin{center}
\subfigure[Upper-Bound of $\mu_{\max}$]{\includegraphics[width = 0.425\textwidth]{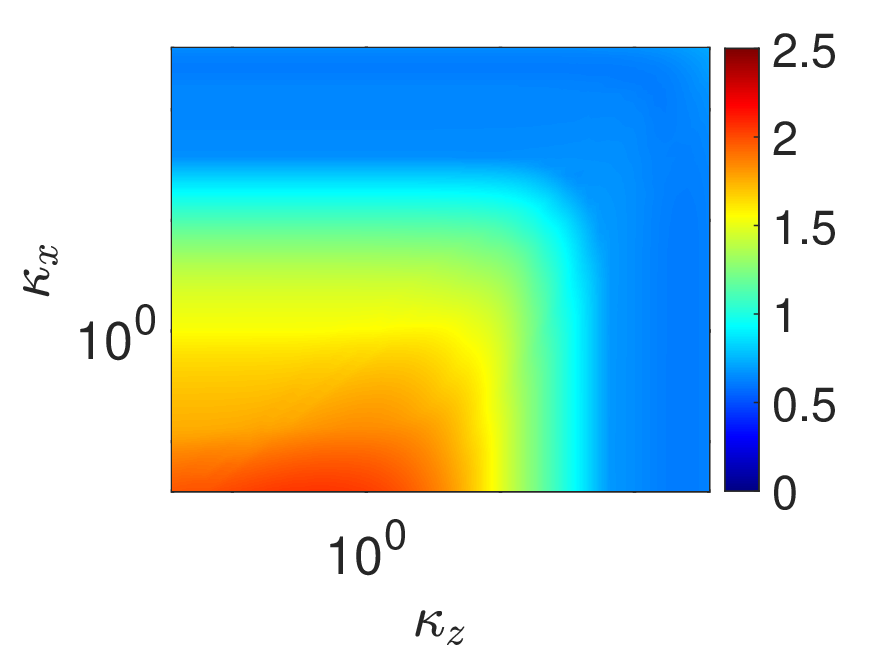} \label{fig:ssv_ub_moc}}
\subfigure[Exact Rank-One $\mu_{\max}$]{\includegraphics[width = 0.425\textwidth]{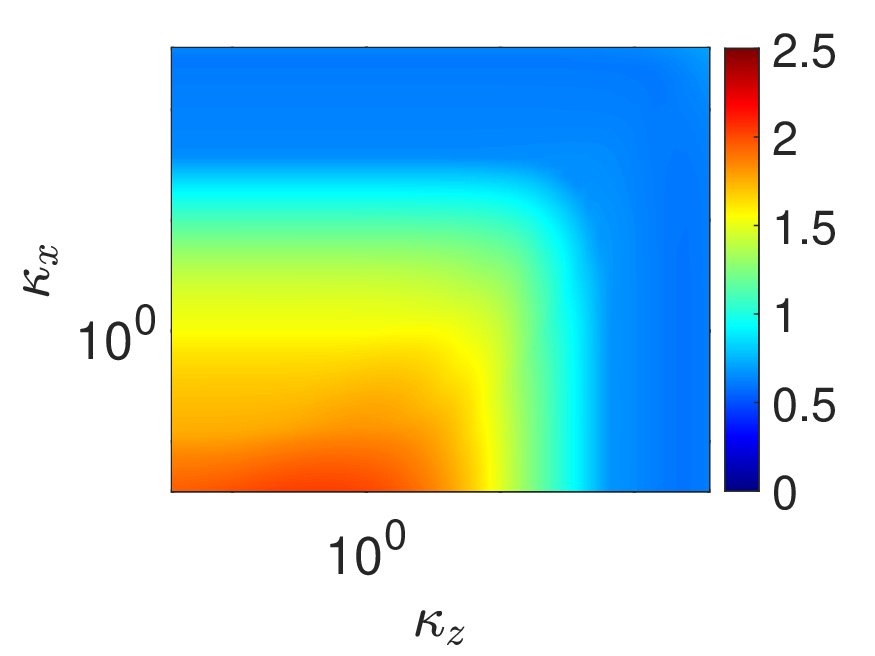} \label{fig:ssv_rank_1_mu}}
\end{center}
\caption{The plots depict the $log_{10}$ values of the upper of $\mu_{\max}$ and $\mu_{\max}$.
We see that $\mu_{\max}$ solutions are similar to the the upper-bounds of $\mu_{\max}$.
The lower-bounds of $\mu_{\max}$ (not shown here) are ``identical" to the $\mu_{\max}$ solutions, i.e., within $1\%$ of each other.}
\label{fig:SSV_rank_1}
\end{figure}
\begin{figure}[!ht]
\begin{center}
\includegraphics[width = 0.52\textwidth]{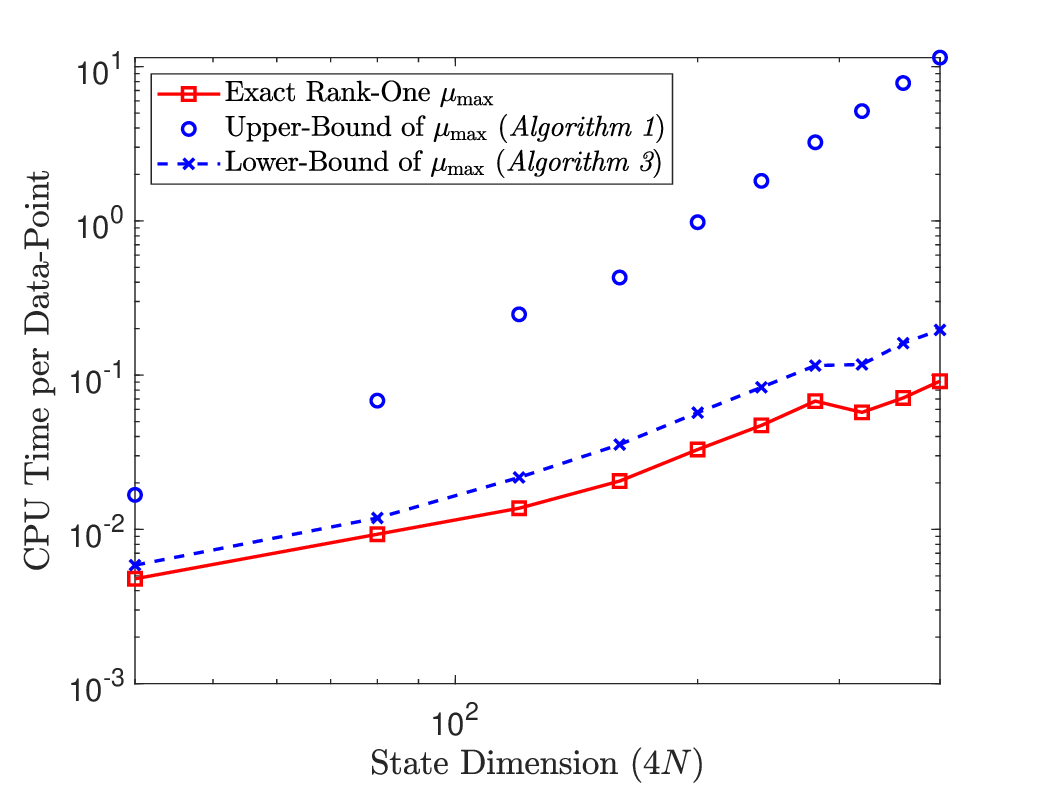}
\end{center}
\caption{The plots show the computational run time for $\mu_{\max}$, and upper and lower-bound calculations of $\mu_{\max}$.}
 \label{fig:runtime}
\end{figure}
\subsection{Discussion}
We can see in figure \ref{fig:SSV_rank_1} that $\mu_{\max}$ values are qualitatively and quantitatively similar (within $5 \%$) to the upper-bounds of $\mu_{\max}$ obtained from \emph{Algorithm 1} in \cite{Mushtaq2023Algorithm}.
In fact, $\mu_{\max}$ values are ``identical" to the lower-bound values of $\mu_{\max}$ (not shown here), i.e., values match up to $1\%$.
Thus, the algorithms converge to the optimal solutions obtained from our method. 

Furthermore, computing $\mu_{\max}$ is relatively fast as compared to obtaining its bounds (see figure \ref{fig:runtime}).
Each point on the plot in figure \ref{fig:runtime} represents the average\footnote{The CPU times are averaged over $10$ data-points. We used an ASUS ROG M15 laptop with Intel 2.6 GHz i7-10750H CPU with 6 cores, 16 GB RAM, and an RTX 2070 Max-Q GPU for run time computations.} CPU time for a single data-point $(\omega,\kappa_x,\kappa_z)$ at each of the state dimensions.
All computational times include CPU time for SVD of $H$ to obtain a rank-one approximation.
From the plot in figure \ref{fig:runtime}, the upper-bound and lower-bound solutions have a time complexity of $\mathcal{O}(N^{2.83})$ and $\mathcal{O}(N^{1.525})$, respectively.
Meanwhile, computing $\mu_{\max}$ from our method has a time complexity of $\mathcal{O}(N^{1.28})$.
%
%
%
%
\section{Conclusion}
%
This work presents an exact solution of SSV for rank-one complex matrices with repeated, complex full-block uncertainties. 
The solution obtained from this method generalizes previous exact solutions for the repeated complex scalar and/or non-repeated complex full-block uncertainties \cite{chen1994structuredpart1,young1994rank}.
We illustrated the proposed method on a turbulent channel flow model. 
In future work, we would like to explore similar arguments to the ones presented here for rank-one complex matrices to compute SSV for general (not necessarily rank-one) complex matrices, especially when $\Delta \in \mathbf{\Delta}$.
\section{Acknowledgements}
This material is based upon work supported by the ARO under grant number W911NF-20-1-0156. 
%
MSH acknowledges support from the AFOSR under award number FA 9550-19-1-0034, 
the NSF under grant
number CBET-1943988 and ONR under award number N000140-22-1-2029.
\bibliographystyle{IEEEtran}
\bibliography{ref}
\end{document}